\g@addto@macro{\UrlBreaks}{\UrlOrds} 
\newcommand{\Return}{\textbf{return}\ }
\newcommand{\Win}{\mathit{Win}}
\newcommand{\Atr}{\textsc{Atr}}
\newcommand{\limsupP}{\textup{Limsup}\ensuremath{P}\xspace}
\newcommand{\limsupeven}{\textup{Limsup\-Even}\xspace}
\newcommand{\limsupodd}{\textup{Limsup\-Odd}\xspace}
\newcommand{\safety}{\textup{Safety}\xspace}
\newcommand{\nodes}{\mathsf{nodes}}
\theoremstyle{theorem}
\newtheorem{fact}[theorem]{Fact}
\title{Parity Games: Zielonka's Algorithm in Quasi-Polynomial Time}
\author{Paweł Parys}{Institute of Informatics, University of Warsaw, Poland}{parys@mimuw.edu.pl}{https://orcid.org/0000-0001-7247-1408}{}
\authorrunning{P. Parys}
\keywords{Parity games, Zielonka's algorithm, quasi-polynomial time}
\begin{document}

\maketitle

\begin{abstract}
	Calude, Jain, Khoussainov, Li, and Stephan~(2017) proposed a quasi-polynomial-time algorithm solving parity games.
	After this breakthrough result, a few other quasi-polynomial-time algorithms were introduced;
	none of them is easy to understand.
	Moreover, it turns out that in practice they operate very slowly.
	On the other side there is the Zielonka's recursive algorithm, which is very simple, exponential in the worst case, and the fastest in practice.
	We combine these two approaches: we propose a small modification of the Zielonka's algorithm, which ensures that the running time is at most quasi-polynomial.
	In effect, we obtain a simple algorithm that solves parity games in quasi-polynomial time.
	We also hope that our algorithm, after further optimizations, can lead to an algorithm that shares the good performance of the Zielonka's algorithm on typical inputs,
	while reducing the worst-case complexity on difficult inputs.
\end{abstract}

\section{Introduction}

	The fundamental role of parity games in automata theory, logic, and their applications to verification and synthesis is doubtless,
	hence it is pointless to elaborate on their importance.
	Let us only mention that the algorithmic problem of finding the winner in parity games is polynomial-time equivalent 
	to the emptiness problem for nondeterministic automata on infinite trees with parity acceptance conditions, and to the model-checking problem for modal $\mu$-calculus~\cite{EJS01}.
	It also lies at the heart of algorithmic solutions to the Church's synthesis problem~\cite{RabinBook}. 
	The impact of parity games reaches relatively far areas of computer science, like Markov decision processes~\cite{FearnleyMDP} and linear programming~\cite{FHZ-simplex}.

	It is a long-standing open question whether parity games can be solved in polynomial-time.
	Several results show that they belong to some classes ``slightly above'' polynomial time.
	Namely, deciding the winner of parity games was shown to be in $\mathsf{NP}\cap\mathsf{coNP}$~\cite{EJS01}, and in $\mathsf{UP}\cap\mathsf{coUP}$~\cite{up-co-up},
	while computing winning strategies is in \textsf{PLS}, \textsf{PPAD}, and even in their subclass \textsf{CLS}~\cite{Daskalakis-Papadimitriou}.
	The same holds for other kinds of games: mean-payoff games~\cite{mean-payoff}, discounted games, and simple stochastic games~\cite{stochastic};
	parity games, however, are the easiest among them, in the sense that there are polynomial-time reductions from parity games to the other kinds of games~\cite{up-co-up,mean-payoff},
	but no reductions in the opposite direction are known.

	Describing the algorithmic side of solving parity games, one has to start with the Zielonka's algorithm~\cite{Zielonka},
	being an adaptation of an approach proposed by McNaughton to solve Muller games~\cite{McNaughton}.
	This algorithm consists of a single recursive procedure, being simple and very natural;
	one may say that it computes who wins the game ``directly from the definition''.
	Its running time is exponential in the worst case~\cite{exponential-lower-bound,exponential-lower-bound2,gazda-phd}, but on many typical inputs it works much faster.
	For over two decades researchers were trying to cutback the complexity of solving parity games, which resulted in a series of algorithms, 
	all of which were either exponential~\cite{BCJLM97,Seidl96,old-progress-measure,strategy-improvement,Schewe-big-steps,priority-promotion},
	or mildly subexponential~\cite{randomized-subexponential,subexponential}.
	The next era came unexpectedly in 2017 with a breakthrough result of Calude, Jain, Khoussainov, Li, and Stephan~\cite{calude} (see also~\cite{parity-short,parity-excursion}), 
	who designed an algorithm working in quasi-polynomial time.
	This invoked a series of quasi-polynomial-time algorithms, which appeared soon after~\cite{small-progress-measure,Fearnley,Lehtinen}.
	These algorithms are quite involved (at least compared to the simple recursive algorithm of Zielonka),
	and it is not so trivial to understand them.

	The four quasi-polynomial-time algorithms~\cite{calude,small-progress-measure,Fearnley,Lehtinen}, at first glance being quite different,
	actually proceed along a similar line (as observed by Bojańczyk and Czerwiński~\cite{separation-toolbox} and Czerwiński et al.~\cite{lower-bound}).
	Namely, out of all the four algorithms one can extract a construction of a safety automaton 
	(nondeterministic in the case of Lehtinen~\cite{Lehtinen}, and deterministic in the other algorithms), 
	which accepts all words encoding plays that are decisively won by one of the players 
	(more precisely: plays consistent with some positional winning strategy), 
	and rejects all words encoding plays in which the player loses (for plays that are won by the player, but not decisively, the automaton can behave arbitrarily).
	This automaton does not depend at all on the game graph; it depends only on its size.
	Having an automaton with the above properties, it is not difficult to convert the original parity game into an equivalent safety game
	(by taking a ``product'' of the parity game and the automaton), which can be solved easily---and all the four algorithms actually proceed this way,
	even if it is not stated explicitly than such an automaton is constructed.
	As shown in Czerwiński et al.~\cite{lower-bound}, all automata having the aforementioned properties have to look very similar: their states have to be leaves of some so-called universal tree;
	particular papers propose different constructions of these trees, and of the resulting automata (of quasi-polynomial size).
	Moreover, Czerwiński et al.~\cite{lower-bound} show a quasi-polynomial lower bound for the size of such an automaton.

	In this paper we propose a novel quasi-polynomial-time algorithm solving parity games.
	It is obtained by applying a small modification to the Zielonka's recursive algorithm;
	this modification guarantees that the worst-case running time of this algorithm, being originally exponential, becomes quasi-polynomial.
	The simplicity of the Zielonka's algorithm remains in place;
	we avoid complicated considerations accompanying all the previous quasi-polynomial-time algorithms.
	Another point is that our algorithm exploits the structure of parity games in a rather different way from the four previous quasi-polynomial-time algorithms.
	Indeed, the other algorithms construct automata that are completely independent from a particular game graph given on input---%
	they work in exactly the same way for every game graph of a considered size.
	The behaviour of our algorithm, in contrast, is highly driven by an analysis of the game graph given on input.
	In particular, although our algorithm is not faster than quasi-polynomial, it does not fit to the ``separator approach'' 
	in which a quasi-polynomial lower bound of Czerwiński et al.~\cite{lower-bound} exists.
	
	The running time of our algorithm is quasi-polynomial, and the space complexity is quadratic 
	(more precisely, $O(n\cdot h)$, where $n$ is the number of nodes in the game graph, and $h$ is the maximal priority appearing there).
	
	Let us also mention the practical side of the world.
	It turns out that parity games are one of the areas where theory does not need to meet practice: the quasi-polynomial-time algorithms, although fastest in theory, are actually the slowest.
	The most exhaustive comparison of existing algorithms was performed by Tom van Dijk~\cite{oink}.
	In his Oink tool he has implemented several algorithms, with different optimizations.
	Then, he has evaluated them on a benchmark of Keiren~\cite{benchmark}, containing multiple parity games obtained from model checking and equivalence checking tasks, 
	as well as on different classes of random games.
	It turns out that the classic recursive algorithm of Zielonka~\cite{Zielonka} performs the best, 
	\textit{ex aequo} with the recent priority promotion algorithm~\cite{priority-promotion}.
	After that, we have the strategy improvement algorithm~\cite{strategy-improvement,strategy-improvement-implem}, being a few times slower.
	Far later, we have the small progress measure algorithm~\cite{old-progress-measure}.
	At the very end, with a lot of timeouts, we have the quasi-polynomial-time algorithm of Fearnley, Jain, Schewe, Stephan, and Wojtczak~\cite{Fearnley}.
	The other quasi-polynomial-time algorithms were not implemented due to excessive memory usage.
	
	While developing the current algorithm, we hoped that it will share the good performance with the Zielonka's algorithm, on which it is based.
	Unfortunately, preliminary experiments have shown that this is not necessarily the case.
	It turns out that
	\begin{itemize}
	\item	on random games our algorithm performs similarly to the slowest algorithms implemented in Oink;
	\item	on crafted game families that are difficult for the Zielonka's algorithm, our algorithm is indeed faster from it, but not dramatically faster;
	\item	the only think that is optimistic is that on games with a very low number of priorities our algorithm performs similarly to the fastest algorithms.
	\end{itemize}
	Because the empirical results of a direct implementation of the algorithm are completely unsatisfactory, we do not include a full description of our experiments.
	Instead, we leave an efficient implementation for a future work.
	Beside of the discouraging outcomes, we believe that our idea, via further optimizations, can lead to an algorithm that is both fast in practice and has a good worst-case complexity
	(see the concluding section for more comments).

\section{Preliminaries}

	A parity game is played on a \emph{game graph} between two players, called Even or Odd (shortened sometimes to $E$ and $O$).
	A game graph consists of
	\begin{itemize}
	\item	a directed graph $G$, where we require that every node has at least one successor, and where there are no self-loops (i.e., edges from a node to itself);
	\item	a labeling of every node $v$ of $G$ by a positive natural number $\pi(v)$, called its \emph{priority};
	\item	a partition of nodes of $G$ between nodes owned by Even and nodes owned by Odd.
	\end{itemize}
	An infinite path in $G$ is called a \emph{play}, while a finite path in $G$ is called a \emph{partial play}.
	The game starts in a designated starting node.
	Then, the player to which the current node belongs, selects a successor of this node, and the game continues there.
	In effect, after a finite time a partial play is obtained, and at the end, after infinite time, this results in a play.
	We say that a play $v_1,v_2,\dots$ is winning for Even if $\limsup_{i\to\infty}\pi(v_i)$ is even (i.e., if the maximal priority seen infinitely often is even).
	Conversely, the play is winning for Odd if $\limsup_{i\to\infty}\pi(v_i)$ is odd.
	
	A \emph{strategy} of player $P\in\{\mbox{Even},\mbox{Odd}\}$ is a function that maps every partial play that ends in a node of $P$ to some its successor.
	Such a function says how $P$ will play in every situation of the game (depending on the history of that game).
	When a (partial) play $\pi$ follows a strategy $\sigma$ in every step in which player $P$ is deciding, we say that $\pi$ \emph{agrees} with $\sigma$.
	A strategy $\sigma$ is \emph{winning} for $P$ from a node $v$ if every play that starts in $v$ and agrees with $\sigma$ is winning for $P$.
	While saying ``player $P$ wins from a node $v$'' we usually mean that $P$ has a winning strategy from $v$.
	Let $\Win_P(G)$ be the set of nodes of $G$ from which $P$ wins; it is called the \emph{winning region} of $P$.
	By the Martin's theorem~\cite{Martin-determinacy} we know that parity games are determined: 
	in every game graph $G$, and for every node $v$ of $G$ either Even wins from $v$, or Odd wins from $v$.
	In effect, $\Win_E(G)$ and $\Win_O(G)$ form a partition of the node set of $G$.

	During the analysis, we also consider games with other winning conditions.
	A \emph{winning condition} is a set of plays.
	The winning conditions of Even and Odd considered in parity games are denoted \limsupeven and \limsupodd, respectively.
	Beside of that, for every set $S$ of nodes, let $\safety(S)$ be the set of plays that use only nodes from $S$.
	
	A \emph{dominion} for Even is a set $S$ of nodes such that from every $v\in S$ Even wins the game with the condition $\limsupeven\cap\safety(S)$;
	in other words, from every node of $S$ he can win the parity game without leaving $S$.
	Likewise, a dominion for Odd is a set $S$ of nodes such that from every $v\in S$ Odd wins the game with the condition $\limsupodd\cap\safety(S)$.
	Notice that the whole $\Win_P(G)$ is a dominion for $P$ (where $P\in\{\mbox{Even},\mbox{Odd}\}$).
	Indeed, if Even is going to win from some $v\in\Win_E(G)$, the play cannot leave $\Win_E(G)$ and enter a node $v'\in\Win_O(G)$, 
	as then Odd could use his winning strategy from $v'$ and win the whole game;
	here we use the fact that all suffixes of a play in \limsupeven are also in \limsupeven.
	For $P=\mbox{Odd}$ the situation is symmetric.

\section{Standard Zielonka's Algorithm}

	Before presenting our algorithm, we recall the standard Zielonka's algorithm, as a reference.
	
	For a set of nodes $N$ in a game graph $G$, and for a player $P\in\{\mbox{Even},\mbox{Odd}\}$,
	we define the \emph{attractor} of $N$, denoted $\Atr_P(G,N)$, to be the set of nodes of $G$ from which $P$ can force to reach a node from $N$.
	In other words, $\Atr_P(G,N)$ is the smallest set such that 
	\begin{itemize}
	\item	$N\subseteq\Atr_P(G,N)$,
	\item	if $v$ is a node of $P$ and some its successor is in $\Atr_P(G,N)$, then $v\in\Atr_P(G,N)$, and
	\item	if $v$ is a node of the opponent of $P$ and all its successors are in $\Atr_P(G,N)$, then $v\in\Atr_P(G,N)$.
	\end{itemize}
	Clearly $\Atr_P(G,N)$ can be computed in time proportional to the size of $G$.
	
	\begin{algorithm}
	\caption{Standard Zielonka's Algorithm}\label{alg:std}
	\begin{algorithmic}[1]
	\Procedure[$h$ is an \underline{even} upper bound for priorities in $G$]{Solve$_E$}{$G,h$}
		\DoBegin
			\State $N_h=\{v\in\nodes(G)\mid \pi(v)=h\}$;\Comment{nodes with the highest priority}
			\State $H=G\setminus\Atr_E(G,N_h)$;\Comment{new game: reaching priority $h$ $\to$ win}
			\State $W_O=\textsc{Solve}_O(H,h-1)$;\Comment{in $W_O$ we lose before reaching priority $h$}
			\State $G=G\setminus\Atr_O(G,W_O)$;\Comment{possibly $N_h\cap\Atr_O(G,W_O)\neq\emptyset$}
		\EndDoWhile{$W_O\neq\emptyset$}
	\EndProcedure
	\end{algorithmic}
	\end{algorithm}
	
	Algorithm~\ref{alg:std} is the standard Zielonka's algorithm.
	The procedure $\textsc{Solve}_E(G,h)$ returns $\Win_E(G)$, the winning region of Even,
	if $h$ is an even number that is greater or equal than all priorities appearing in $G$.
	A procedure $\textsc{Solve}_O(G,h)$ is also needed; it is identical to $\textsc{Solve}_E(G,h)$ except that the roles of $E$ and $O$ are swapped; it returns $\Win_O(G)$, the winning region of Odd.
	While writing $G\setminus S$, we mean the game obtained by removing from $G$ all nodes in $S$, and all edges leading to nodes in $S$ or starting from nodes in $S$.
	We use this construct only when $S$ is an attractor; in such a case, if all successors of a node $v$ are removed, then $v$ is also removed 
	(i.e., if all successors of $v$ belong to an attractor, then $v$ belongs to the attractor as well).
	In effect $G\setminus S$ is a valid game graph (every its node has at least one successor).
	
	We remark that the algorithm is presented in a slightly different way than usually.
	Namely, we use here a loop, while the usual presentation does not use a loop but rather calls recursively $\textsc{Solve}_E(G\setminus\Atr_O(G,W_O),h)$ at the end of the procedure.
	This is only a superficial difference in the presentation, but is useful while modifying the algorithm in the next section.

	\begin{figure*}
		\begin{center}
			\import{pics/}{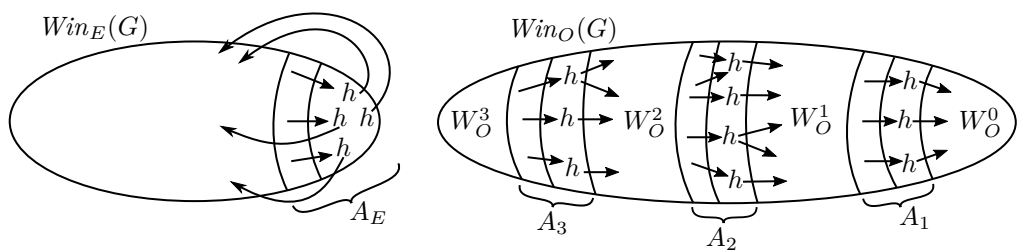}
		\end{center}
		\caption{The structure of winning regions in a parity game}
		\label{fig:1}
	\end{figure*}
	
	The algorithm can be understood while looking at Figure~\ref{fig:1}.
	Let $h$ be the highest priority used in $G$; assume that it is even.
	The game graph $G$ can be divided into two parts: $\Win_E(G)$ and $\Win_O(G)$.
	In $\Win_E(G)$ we can distinguish the attractor of nodes with priority $h$ (denoted $A_E$).
	Odd either loses inside $\Win_E(G)\setminus A_E$, or enters $A_E$, which causes that a node with priority $h$ is seen, and then the game continues in some node of $\Win_E(G)$.
	The winning region of Odd, $\Win_O(G)$, can be divided into multiple parts.
	We have a part $W_O^0$, where Odd can win without seeing a node of priority $h$.
	Then, we have nodes of priority $h$ from which Even is forced to enter $W_O^0$, and their attractor, denoted $A_1$.
	Then, we have a part $W_O^1$, where Odd can ensure that the play is either winning for him inside $W_O^1$ or enters $A_1$;
	in other words, from nodes of $W_O^1$ Odd can win while seeing $h$ at most once.
	We also have parts $W_O^i$ for larger $i$, and corresponding attractors $A_i$.
	
	While running the algorithm, this partition of $G$ is not known, and has to be discovered.
	To this end, the algorithm assumes first (in the game $H$) that all nodes of priority $h$ are winning for Even.
	The first call to $\textsc{Solve}_O(H,h-1)$ returns the set $W_O^0$ of nodes where Odd wins without seeing a node of priority $h$.
	We then remove them from the game, together with the attractor $A_1$.
	In the next step, $\textsc{Solve}_O(H,h-1)$ returns the set $W_O^1$, and so on.
	At the end the whole $\Win_O(G)$ becomes removed, and the procedure returns $\Win_E(G)$.

\section{Quasi-Polynomial-Time Algorithm}
	
	We now present a modification to Algorithm~\ref{alg:std} that results in obtaining quasi-polynomial running time, in the worst case.
	
	The modification can be understood while looking again at Figure~\ref{fig:1}.
	The key observation is that, while $\Win_O(G)$ is of size at most $n$ (where $n$ is the number of nodes in $G$),
	then most of its parts $W_O^i$ are smaller.
	Namely, most of them have to be of size at most $\frac{n}{2}$, and only one of them can be larger than $\frac{n}{2}$.
	We use this observation, and while looking for $W_O^i$, we search for a winning region (for a dominion) of size at most $\frac{n}{2}$.
	Usually this is enough; only once it is not enough: one $W_O^i$ can be larger than $\frac{n}{2}$ and it will not be found if we only look for a set of size at most $\frac{n}{2}$.
	But when the algorithm finds no set of size at most $\frac{n}{2}$, we can once search for $W_O^i$ of an arbitrary size.
	After that, we know that all following sets $W_O^i$ are again of size at most $\frac{n}{2}$.
	While going recursively, we notice that every $W_O^i$ can be further subdivided in a similar way, while splitting on the priority $h-2$.
	If $|W_O^i|\leq\frac{n}{2}$, we again have the property that most of the parts of $W_O^i$ are of size at most $\frac{n}{4}$, and only one of them can be larger than $\frac{n}{4}$.
	
	To exploit this observation, in the recursive calls we pass two precision parameters, $p_E$ and $p_O$ (one for every of the players),
	saying that we search for winning sets of size at most $p_E$ for Even, and at most $p_O$ for Odd.
	The modified procedure is presented as Algorithm~\ref{alg:new}.
	Again, one also needs a procedure $\textsc{Solve}_O$, which is obtained from $\textsc{Solve}_E$ by literally changing every $E$ to $O$ and \textit{vice versa}.

	\begin{algorithm}
	\caption{Quasi-Polynomial-Time Algorithm}\label{alg:new}
	\begin{algorithmic}[1]
	\Procedure[$p_E,p_O$ are new ``precision'' parameters]{Solve$_E$}{$G, h, p_E, p_O$}
		\If {$G=\emptyset\lor p_E\leq 1$} \State \Return $\emptyset$;\Comment{we assume that there are no self-loops in $G$}
		\DoBegin
			\State $N_h=\{v\in\nodes(G)\mid \pi(v)=h\}$;
			\State $H=G\setminus\Atr_E(G,N_h)$;
			\State $W_O=\textsc{Solve}_O(H,h-1,\lfloor p_O/2\rfloor,p_E)$;\Comment{precision decreased}
			\State $G=G\setminus\Atr_O(G,W_O)$;
		\EndDoWhile{$W_O\neq\emptyset$}
		\State $N_h=\{v\in\nodes(G)\mid \pi(v)=h\}$;
		\State $H=G\setminus\Atr_E(G,N_h)$;
		\State $W_O=\textsc{Solve}_O(H,h-1,p_O,p_E)$;\Comment{we try once with the full precision}
		\State $G=G\setminus\Atr_O(G,W_O)$;
		\While{$W_O\neq\emptyset$}
			\State $N_h=\{v\in\nodes(G)\mid \pi(v)=h\}$;
			\State $H=G\setminus\Atr_E(G,N_h)$;
			\State $W_O=\textsc{Solve}_O(H,h-1,\lfloor p_O/2\rfloor,p_E)$;\Comment{again, precision decreased}
			\State $G=G\setminus\Atr_O(G,W_O)$;
		\EndWhile
		\State \Return $\nodes(G)$;
	\EndProcedure
	\end{algorithmic}
	\end{algorithm}

	We start the algorithm with $p_E=p_O=n$, where $n$ is the number of nodes in $G$.
	In the procedure we have now, in a sense, three copies of the previous procedure, corresponding to three stages.
	In the first stage, in lines 5-10, we look for sets $W_O^i$ of size at most $\lfloor\frac{p_O}{2}\rfloor$.
	If the returned set is empty, this may mean that the next $W_O^i$ either is empty, or is of size greater than $\lfloor\frac{p_O}{2}\rfloor$.
	Then, in lines 11-14, we once search for a set $W_O^i$ of size at most $p_O$ (knowing that if it is nonempty, then its size is greater than $\lfloor\frac{p_O}{2}\rfloor$).
	Finally, in the loop in lines 15-20, we again look for sets $W_O^i$ of size at most $\lfloor\frac{p_O}{2}\rfloor$ 
	(because we have already found a set of size greater than $\lfloor\frac{p_O}{2}\rfloor$, all the remaining sets have size at most $\lfloor\frac{p_O}{2}\rfloor$).

\section{Complexity Analysis}
	
	Let us analyze the complexity of our algorithm.
	
	First, we observe that the space complexity is $O(n\cdot h)$, where $n$ is the number of nodes, and $h$ is the maximal priority.
	Indeed, the depth of the recursion is at most $h$, and on every step we only need to remember some sets of nodes.
	
	We now come to the running time.
	As it is anyway worse than the running time of the other quasi-polynomial-time algorithms, 
	we do not aim in proving a very tight upper bound; we only prove that the running time is quasi-polynomial.
	
	Let $R(h,l)$ be the number of (nontrivial) executions of the $\textsc{Solve}_E$ and $\textsc{Solve}_O$ procedures performed during one call to $\textsc{Solve}_E(G,h,p_E,p_O)$ with 
	$\lfloor\log p_E\rfloor+\lfloor\log p_O\rfloor=l$, 
	and with $G$ having at most $n$ nodes (where $n$ is fixed).
	We only count here nontrivial executions, that is, such that do not leave the procedure in line 4.
	Clearly $R(0,l)=R(h,0)=0$.
	For $h,l\geq 1$ it holds that
	\begin{align}
		R(h,l)\leq 1+n\cdot R(h-1,l-1)+R(h-1,l)\,.\label{eq:1}
	\end{align}
	Indeed, in $\textsc{Solve}_E$ after every call to $\textsc{Solve}_O$ we remove at least one node from $G$,
	with the exception of two such calls: the last call in line 8, and the last call ever.
	In effect, in lines 8 and 18 we have at most $n$ calls to $\textsc{Solve}_O$ with decreased precision 
	(plus, potentially, the $(n+1)$-th call with empty $G$, which is not included in $R(h,l)$),
	and in line 13 we have one call to $\textsc{Solve}_O$ with full precision. 
	Notice that $\lfloor\log p_O\rfloor$ (hence also $l$) decreases by $1$ in the decreased-precision call.
	
	Using Inequality~\eqref{eq:1} we now prove by induction that $R(h,l)\leq n^l\cdot\binom{h+l}{l}-1$.
	For $h=0$ and for $l=0$ the inequality holds.
	For $h,l\geq 1$ we have that 
	\begin{align*}
		R(h,l)&\leq 1+n\cdot R(h-1,l-1)+R(h-1,l)\\
		&\leq 1+n\cdot\left(n^{l-1}\cdot\binom{h-1+l-1}{l-1}-1\right)+n^l\cdot\binom{h-1+l}{l}-1\\
		&\leq n^l\cdot\left(\binom{h-1+l}{l-1}+\binom{h-1+l}{l}\right)-1\\
		&=n^l\cdot\binom{h+l}{l}-1\,.
	\end{align*}
	In effect, $R(h,l)\leq n^l\cdot(h+l)^l$.
	Recalling that we start with $l=2\cdot\lfloor\log n\rfloor$, we see that this number is quasi-polynomial in $n$ and $h$.
	This concludes the proof, 
	since obviously a single execution of the $\textsc{Solve}_E$ procedure (not counting the running time of recursive calls) costs polynomial time.

\section{Correctness}

	We now justify correctness of the algorithm.
	This amounts to proving the following lemma.
	
	\begin{lemma}\label{lem:1}
		Procedure $\textsc{Solve}_E(G,h,p_E,p_O)$ returns a set $W_E$ such that for every $S\subseteq\nodes(G)$,
		\begin{itemize}
		\item	if $S$ is a dominion for Even, and $|S|\leq p_E$, then $S\subseteq W_E$, and
		\item	if $S$ is a dominion for Odd, and $|S|\leq p_O$, then $S\cap W_E=\emptyset$.
		\end{itemize}
	\end{lemma}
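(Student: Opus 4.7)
The plan is to prove the lemma by induction on $h$ (jointly with the symmetric statement for $\textsc{Solve}_O$). Whenever the procedure hits its base case the claim is immediate: $G=\emptyset$ makes both directions vacuous, while $p_E\le 1$ combined with the no-self-loop assumption rules out every singleton Even dominion, and $W_E=\emptyset$ is trivially disjoint from any Odd dominion.

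For the inductive step, index the iterations of the combined outer loop of $\textsc{Solve}_E$ as $i=1,\dots,K$ and write $G^{(i)}$, $H^{(i)}$, $W_O^{(i)}$ for the values of $G$, $H$, $W_O$ after iteration $i$; let $m_1$ be the last stage-1 iteration (so $W_O^{(m_1)}=\emptyset$, hence $G^{(m_1)}=G^{(m_1-1)}$), and note that stage~2 is iteration $m_1+1$. I handle part~(a) by induction on $i$: assuming $S$ is an Even dominion with $|S|\le p_E$ and $S\subseteq\nodes(G^{(i-1)})$, the slice $S\cap\nodes(H^{(i)})$ is still an Even dominion in $H^{(i)}$ (Odd-owned nodes of an Even dominion keep all successors in $S$, and Even-owned nodes outside $\Atr_E(G^{(i-1)},N_h)$ keep all $G^{(i-1)}$-successors outside $\Atr_E$ by the attractor's contrapositive definition), so the inductive hypothesis on the recursive call gives $W_O^{(i)}\cap S=\emptyset$; any play combining Even's dominion strategy with Odd's attractor strategy would otherwise stay in $S$ while reaching $W_O^{(i)}$, an impossibility, so $S\cap\Atr_O(G^{(i-1)},W_O^{(i)})=\emptyset$ and $S\subseteq\nodes(G^{(i)})$.

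For part~(b), set $S_i:=S\cap\nodes(G^{(i)})$; a parallel attractor argument shows $S_i$ remains an Odd dominion in $G^{(i)}$. I introduce the ``base stratum''
\[
U_i := \Win_O\bigl((G^{(i-1)}|_{S_{i-1}})\setminus\Atr_E(G^{(i-1)}|_{S_{i-1}},\,N_h\cap S_{i-1})\bigr),
\]
the region of $S_{i-1}$ from which Odd wins while avoiding priority~$h$. The main obstacle is to establish two facts about $U_i$. First, $U_i$ is itself an Odd dominion in the algorithm's $H^{(i)}$: because Odd has fewer moves in the sub-game, one has $\Atr_E(G^{(i-1)}|_{S_{i-1}},N_h\cap S_{i-1})\supseteq\Atr_E(G^{(i-1)},N_h)\cap S_{i-1}$, which places $U_i$ inside $\nodes(H^{(i)})$; since Even-owned nodes of $S_{i-1}$ already have all $G^{(i-1)}$-successors inside $S_{i-1}$, Odd's winning sub-sub-game strategy lifts verbatim to $H^{(i)}$ while keeping play inside $U_i$. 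Second, $U_i\ne\emptyset$ whenever $S_{i-1}\ne\emptyset$, because otherwise Even would win from every $v\in S_{i-1}$ by combining a LimsupEven-winning strategy on the sub-sub-game with the iterated attractor strategy to $N_h$, contradicting the Odd dominion property.

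Now assume toward contradiction that $S_K\ne\emptyset$. Whenever the algorithm exits with $W_O^{(i)}=\emptyset$, the inductive hypothesis on the recursive call forces $U_i\subseteq W_O^{(i)}=\emptyset$ as long as $|U_i|$ fits the precision used at iteration~$i$. In case~A (stage~2 returns empty, so $K=m_1+1$), iteration~$K$ uses precision $p_O$ and $|U_K|\le|S_{K-1}|\le p_O$, so $U_K=\emptyset$, forcing $S_{K-1}=\emptyset$ by the nonemptiness property and contradicting the assumption. In case~B (stage~3 runs), both last iterations of stages~1 and~3 use precision $\lfloor p_O/2\rfloor$ and return empty, so $|U_{m_1}|,|U_K|>\lfloor p_O/2\rfloor$. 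Since $G^{(m_1)}=G^{(m_1-1)}$ one has $U_{m_1+1}=U_{m_1}$, and stage~2 at precision $p_O$ captures it; hence $U_{m_1}\cap\nodes(G^{(m_1+1)})=\emptyset$, making $U_{m_1}$ disjoint from $U_K\subseteq S_{K-1}\subseteq\nodes(G^{(m_1+1)})$. The disjoint subsets $U_{m_1},U_K\subseteq S$ therefore satisfy $|U_{m_1}|+|U_K|\ge 2(\lfloor p_O/2\rfloor+1)\ge p_O+1$, contradicting $|S|\le p_O$.
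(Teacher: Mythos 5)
Your proposal is correct and follows essentially the same route as the paper's proof: outer induction on $h$, an inner induction over the iterations using the two dominion-preservation facts (removing one's own attractor, and removing the opponent's attractor of a disjoint set), the ``win while avoiding priority $h$'' stratum (your $U_i$ is the paper's $Z^i$, just defined as a winning region of an explicit sub-subgame), its nonemptiness, and the observation that the full-precision call captures the single large stratum. The only differences are cosmetic: the paper concludes by bounding $|S^{k+1}|\leq\lfloor p_O/2\rfloor$ and showing the last stratum is empty, whereas you package the same counting as a contradiction via two disjoint subsets of $S$ each of size greater than $\lfloor p_O/2\rfloor$.
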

	
	Notice that in $G$ there may be nodes that do not belong to any dominion smaller than $p_E$ or $p_O$; 
	for such nodes we do not specify whether or not they are contained in $W_E$.
	
	Recall that $\Win_E(G)$ is a dominion for Even, and $\Win_O(G)$ is a dominion for Odd.
	Thus, using Lemma~\ref{lem:1} we can conclude that for $p_E=p_O=n$ the procedure returns $\Win_E(G)$, the winning region of Even.

	One may wonder why we use dominions in the statement of the lemma, instead of simply saying that if $|\Win_E(G)|\leq p_E$, then $\Win_E(G)\subseteq W_E$.
	Such a simplified statement, however, is not suitable for induction.
	Indeed, while switching from the game $G$ to the game $H$ (created in lines 7, 12, 17) the winning regions of Even may increase dramatically,
	because in $H$ Odd is not allowed to visit any node with priority $h$.
	Nevertheless, the winning region of Even in $G$, and any dominion of Even in $G$, remains a dominion in $H$ (when restricted to nodes of $H$).
	
	Before proving Lemma~\ref{lem:1}, let us observe two facts about dominions.
	In their statements $P\in\{\mbox{Even},\mbox{Odd}\}$ is one of the players, and $\overline P$ is his opponent.
	
	\begin{fact}\label{fact:usun-moj}
		If $S$ is a dominion for $P$ in a game $G$, and $X$ is a set of nodes of $G$, then $S\setminus\Atr_P(G,X)$ is a dominion for $P$ in $G\setminus\Atr_P(G,X)$.
	\end{fact}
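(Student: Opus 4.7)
Let $A=\Atr_P(G,X)$. The plan is to take any witnessing strategy $\sigma$ for the hypothesis that $S$ is a dominion for $P$ in $G$ --- a strategy that, from every node of $S$, ensures the play satisfies $\limsupP$ while never leaving $S$ --- and to argue that the \emph{same} strategy, restricted to partial plays that live in $G\setminus A$, witnesses that $S\setminus A$ is a dominion for $P$ in $G\setminus A$.

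The main step will be to verify that whenever a partial play stays inside $S\setminus A$, $\sigma$ both prescribes a legal move in $G\setminus A$ and cannot push the play into $A$. I would split this by ownership of the current node $v\in S\setminus A$. If $v$ is owned by $P$, then by the contrapositive of the definition of $\Atr_P(G,X)$ no successor of $v$ in $G$ lies in $A$, so the successor selected by $\sigma$ is automatically in $\nodes(G)\setminus A$, and the corresponding edge survives in $G\setminus A$. If instead $v$ is owned by $\overline{P}$, then the moves available to the opponent in $G\setminus A$ are a subset of the moves available in $G$, so $\sigma$ is already prepared to answer them. Together these two cases show that $\sigma$ defines a valid strategy in $G\setminus A$ and that no play following it leaves $\nodes(G)\setminus A$.

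From here the conclusion is almost immediate. Any play $\pi$ in $G\setminus A$ starting in $S\setminus A$ and consistent with $\sigma$ is also a play in $G$ starting in $S$ and consistent with $\sigma$; hence by the choice of $\sigma$ it satisfies $\limsupP$ and stays inside $S$. Combined with the preceding observation that $\pi$ never enters $A$, this gives that $\pi$ stays inside $S\setminus A$ while satisfying $\limsupP$, which is exactly what is required for $S\setminus A$ to be a dominion for $P$ in $G\setminus A$.

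I do not expect a serious obstacle: the entire content of the fact is the asymmetric closure property of attractors on $P$-owned nodes, which is used exactly once, in the case analysis above. The only bookkeeping to be a bit careful about is making sure strategies, partial plays, and winning conditions are being interpreted in the right game (in particular, that $\sigma$ restricted to partial plays in $G\setminus A$ really is a legal strategy there), but this is handled automatically by the ownership split and requires no additional ideas.
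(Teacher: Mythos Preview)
Your proposal is correct and follows essentially the same approach as the paper: reuse the witnessing strategy $\sigma$ from $G$ unchanged, and justify its validity in $G\setminus A$ via the attractor closure property that a $P$-owned node outside $\Atr_P(G,X)$ has all its successors outside $\Atr_P(G,X)$. The paper's proof is slightly terser (it does not explicitly separate the $\overline P$-owned case), but the content is the same.
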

	
	\begin{proof}
		Denote $S'=S\setminus\Atr_P(G,X)$ and $G'=G\setminus\Atr_P(G,X)$.
		By definition, from every node $v\in S$ player $P$ wins with the condition $\limsupP\cap\safety(S)$ in $G$, using some winning strategy.
		Observe that using the same strategy he wins with the condition $\limsupP\cap\safety(S')$ in $G'$ (assuming that the starting node $v$ is in $S'$).
		The strategy remains valid in $G'$, because every node $u$ of player $P$ that remains in $G'$ has the same successors in $G'$ as in $G$
		(conversely: if some of successors of $u$ belongs to $\Atr_P(G,X)$, then $u$ also belongs to $\Atr_P(G,X)$).		
	\end{proof}
	
	\begin{fact}\label{fact:usun-jego}
		If $S$ is a dominion for $P$ in a game $G$, and $X$ is a set of nodes of $G$ such that $S\cap X=\emptyset$, 
		then $S$ is a dominion for $P$ in $G\setminus\Atr_{\overline P}(G,X)$ (in particular $S\subseteq\nodes(G\setminus\Atr_{\overline P}(G,X))$).
	\end{fact}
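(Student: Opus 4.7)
The plan is to prove two things at once: (i) $S$ is disjoint from $A := \Atr_{\overline P}(G,X)$, which in particular gives the ``in particular'' clause, and (ii) $P$'s dominion-witnessing strategies in $G$ remain valid in $G' := G\setminus A$. Claim (ii) follows quite mechanically from claim (i), so the real work is in (i).

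For (i), I would induct on the attractor rank. Stratify $A$ as $A_0 \subseteq A_1 \subseteq \dots$, where $A_0 = X$ and $A_{k+1}$ adds $\overline P$-nodes having a successor in $A_k$ together with $P$-nodes all of whose successors lie in $A_k$. Suppose for contradiction that $S \cap A \neq \emptyset$, and pick $v \in S \cap A_k$ with $k$ minimal. Since $v \in S$, by the dominion hypothesis $P$ has a strategy $\sigma$ winning from $v$ in $G$ with condition $\limsupP \cap \safety(S)$, where $P$ is the player owning $S$. Now do cases: if $k=0$ then $v \in X$, contradicting $S\cap X = \emptyset$; if $v$ is a $P$-node at rank $k\ge 1$, the move $\sigma$ picks at $v$ must land in $S$ (by safety), yet it also lies in $A_{k-1}$ (since all successors of $v$ are at lower rank), violating minimality; if $v$ is an $\overline P$-node at rank $k \ge 1$, some successor $u$ lies in $A_{k-1}$, and because $\overline P$ is free to play $u$ while $\sigma$ wins with $\safety(S)$, we get $u \in S$, again violating minimality.

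Part (ii) is then straightforward. A useful side observation, proved in exactly the same style as the $\overline P$-case above, is that every $\overline P$-node $v\in S$ has \emph{all} of its $G$-successors in $S$ (else $\overline P$ could escape $S$, defeating $\safety(S)$); since $S\cap A = \emptyset$, none of these successors lies in $A$, so the $G'$-successors of $v$ coincide with its $G$-successors. For a $P$-node $v \in S$, the move chosen by $\sigma$ still lands in $S$ and hence is available in $G'$. Therefore $\sigma$, viewed in $G'$, is a well-defined strategy from $v$, and every $G'$-play agreeing with it is also a $G$-play agreeing with it, so it satisfies $\limsupP \cap \safety(S)$. This is exactly what it means for $S$ to be a dominion for $P$ in $G'$.

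I expect the minimal-rank argument in step (i) to be the main obstacle, essentially because one has to exploit the safety aspect of the dominion (not just the parity aspect) in two slightly different ways depending on node ownership. Once that is handled, the rest is bookkeeping about successors being preserved when removing an $\overline P$-attractor.
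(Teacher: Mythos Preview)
Your argument is correct. The difference from the paper's proof lies almost entirely in part (i). The paper dispatches $S\cap\Atr_{\overline P}(G,X)=\emptyset$ in one line by invoking the \emph{strategic} meaning of the attractor: from any node of $\Atr_{\overline P}(G,X)$ player $\overline P$ has a strategy that forces the play to reach $X$; combining this with $P$'s dominion strategy (which confines the play to $S$) and $S\cap X=\emptyset$ immediately gives a contradiction. You instead unfold this strategic fact via a rank induction on the syntactic definition of the attractor, doing the case split on ownership by hand. Your approach is more self-contained (it does not rely on the reader already knowing that the attractor equals the set from which $\overline P$ can force reaching $X$), while the paper's is considerably shorter. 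For part (ii) both proofs make the same observation---plays following $P$'s strategy never leave $S$, and $S$ is untouched in $G'$---though you spell out the successor-preservation for $\overline P$-nodes explicitly, whereas the paper leaves it implicit in the sentence ``the whole $S$ remains unchanged in $G'$.''
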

	
	\begin{proof}
		Denote $G'=G\setminus\Atr_{\overline P}(G,X)$.
		Suppose that there is some $v\in S\cap\Atr_{\overline P}(G,X)$.
		On the one hand, $P$ can guarantee that, while starting from $v$, the play stays in $S$ (by the definition of a dominion);
		on the other hand, $\overline P$ can force to reach the set $X$ (by the definition of an attractor), which is disjoint from $S$.
		Thus such a node $v$ could not exist, we have $S\subseteq\nodes(G')$.
		
		It remains to observe that from every node $v\in S$ player $P$ wins with the condition $\limsupP\cap\safety(S)$ also in the restricted game $G'$,
		using the same strategy as in $G$.
		Indeed, a play in $G$ following this strategy never leaves $S$, and the whole $S$ remains unchanged in $G'$.
	\end{proof}
	
	We are now ready to prove Lemma~\ref{lem:1}.
	
	\begin{proof}[Proof of Lemma~\ref{lem:1}]
		We prove the lemma by induction on $h$.
		Consider some execution of the procedure.
		By $G^i, N_h^i, H^i, W_O^i$ we denote values of the variables $G, N_h, H, W_O$ just after the $i$-th call to $\textsc{Solve}_O$ in one of the lines 8, 13, 18;
		in lines 9, 14, 19 we create $G^{i+1}$ out of $G^i$ and $W_O^i$.
		In particular $G^1$ equals the original game $G$, and at the end we return $\nodes(G^{m+1})$, where $m$ is the number of calls to $\textsc{Solve}_O$.
	
		Concentrate on the first item of the lemma: fix an Even's dominion $S$ in $G$ (i.e., in $G^1$) such that $|S|\leq p_E$.
		Assume that $S\neq\emptyset$ (for $S=\emptyset$ there is nothing to prove).
		Notice first that a nonempty dominion has at least two nodes (by assumption there are no self-loops in $G$, hence every play has to visit at least two nodes),
		thus, because $S\subseteq\nodes(G)$ and $|S|\leq p_E$, we have that $G\neq\emptyset$ and $p_E>1$.
		It means that the procedure does not return in line 4.
		We thus need to prove that $S\subseteq\nodes(G^{m+1})$.
		
		We actually prove that $S$ is a dominion for Even in $G^i$ for every $i\in\{1,\dots,m+1\}$, meaning in particular that $S\subseteq\nodes(G^i)$.
		This is shown by an internal induction on $i$.
		The base case ($i=1$) holds by assumption.
		For the induction step, consider some $i\in\{1,\dots,m\}$.
		By the induction assumption $S$ is a dominion for Even in $G^i$, and we need to prove that it is a dominion for Even in $G^{i+1}$.
		
		Consider $S^i=S\cap\nodes(H^i)$.
		Because $S^i=S\setminus\Atr_E(G^i,N_h^i)$, by Fact~\ref{fact:usun-moj} the set $S^i$ is a dominion for Even in $H^i=G^i\setminus\Atr_E(G^i,N_h^i)$, and obviously $|S^i|\leq|S|\leq p_E$.
		By the assumption of the external induction (which can be applied to $\textsc{Solve}_O$, by symmetry) 
		it follows that $S^i\cap W_O^i=\emptyset$, so also $S\cap W_O^i=\emptyset$ (because $W_O^i$ contains only nodes of $G^i$, while $S\setminus S^i$ contains no nodes of $G^i$).
		Thus, by Fact~\ref{fact:usun-jego} the set $S$ is a dominion for Even in $G^{i+1}=G^i\setminus\Atr_O(G^i,W_O^i)$.
		This finishes the proof of the first item.
	
		Now we prove the second item of the lemma.
		To this end, fix some Odd's dominion $S$ in $G$ such that $|S|\leq p_O$.
		If $p_E\leq 1$, we return $W_E=\emptyset$ (line 4), so clearly $S\cap W_E=\emptyset$.
		The interesting case is when $p_E\geq 2$.
		Denote $S^i=S\cap\nodes(G^i)$ for all $i\in\{1,\dots,m+1\}$; we first prove that $S^i$ is a dominion for Odd in $G^i$.
		This is shown by induction on $i$.
		The base case of $i=1$ holds by assumption, because $G^1=G$ and $S^1=S$.
		For the induction step, assume that $S^i$ is a dominion for Odd in $G^i$, for some $i\in\{1,\dots,m\}$.
		By definition $G^{i+1}=G^i\setminus\Atr_O(G^i,W_O^i)$ and $S^{i+1}=S^i\setminus\Atr_O(G^i,W_O^i)$,
		so $S^{i+1}$ is a dominion for Odd in $G^{i+1}$ by Fact~\ref{fact:usun-moj}, which finishes the inductive proof.
		
		For $i\in\{1,\dots,m\}$, let $Z^i$ be the set of nodes (in $S^i\setminus N_h^i$) from which Odd wins with the condition $\limsupodd\cap\safety(S^i\setminus N_h^i)$ in $G^i$
		(that is, where Odd can win without seeing priority $h$---the highest even priority).
		Let us observe that if $S^i\neq\emptyset$ then $Z^i\neq\emptyset$ ($\clubsuit$).
		Indeed, suppose to the contrary that $Z^i=\emptyset$, 
		and consider an Odd's strategy that allows him to win with the condition $\limsupodd\cap\safety(S^i)$ in $G^i$, from some node $v_0\in S^i$.
		Because $v_0\not\in Z^i$, this strategy in not winning for the condition $\limsupodd\cap\safety(S^i\setminus N_h^i)$, 
		so Even, while playing against this strategy, can reach a node $v_1$ in $N_h^i$ (as he cannot violate the parity condition nor leave $S^i$).
		For the same reason, because $v_1\not\in Z^i$, Even can continue and reach a node $v_2$ in $N_h^i$.
		Repeating this forever, Even gets priority $h$ (which is even and is the highest priority) infinitely many times, contradicting the fact that the strategy was winning for Odd.
		
		Observe also that from nodes of $Z^i$ Odd can actually win with the condition $\limsupodd\cap\safety(Z^i)$ in $G^i$,
		using the strategy that allows him to win with the condition $\limsupodd\cap\safety(S^i\setminus N_h^i)$.
		Indeed, if a play following this strategy enters some node $v$, then from this node $v$ Odd can still win with the condition $\limsupodd\cap\safety(S^i\setminus N_h^i)$,
		which means that these nodes belongs to $Z^i$.
		It follows that $Z^i$ is a dominion for Odd in $G^i$.
		Moreover, because $Z^i\cap N_h^i=\emptyset$, from Fact~\ref{fact:usun-jego} we have that $Z^i$ is a dominion for Odd in $H^i=G^i\setminus\Atr_E(G^i,N_h^i)$.
		
		Let $k$ be the number of the call to $\textsc{Solve}_O$ that is performed in line 13 (calls number $1,\dots,k-1$ are performed in line 8, and calls number $k+1,\dots,m$ are performed in line 18).
		Recall that $W_O^i$ is the set returned by a call to $\textsc{Solve}_O(H^i,h-1,p_O^i,p_E)$, where $p_O^k=p_O$, and $p_O^i=\lfloor\frac{p_O}{2}\rfloor$ if $i\neq k$.
		From the assumption of the external induction, if $|Z^i|\leq\lfloor\frac{p_O}{2}\rfloor$ or if $i=k$ (since $Z^i\subseteq S^i\subseteq S$ and $|S|\leq p_O$, clearly $|Z^i|\leq p_O$), 
		we obtain that $Z^i\subseteq W_O^i$ ($\spadesuit$).
		
		We now prove that $|S^{k+1}|\leq\lfloor\frac{p_O}{2}\rfloor$.
		This clearly holds if $S^{k-1}=\emptyset$, because $S^{k+1}\subseteq S^k\subseteq S^{k-1}$.
		Suppose thus that $S^{k-1}\neq\emptyset$.
		Then $Z^{k-1}\neq\emptyset$, by ($\clubsuit$).
		On the other hand, $W_O^{k-1}=\emptyset$, because we are just about to leave the loop in lines 5-10 (the $k$-th call to $\textsc{Solve}_O$ is in line 13).
		By ($\spadesuit$), if $|Z^{k-1}|\leq\lfloor\frac{p_O}{2}\rfloor$, then $Z^{k-1}\subseteq W_O^{k-1}$, which does not hold in our case.
		Thus $|Z^{k-1}|>\lfloor\frac{p_O}{2}\rfloor$.
		Because $W_O^{k-1}=\emptyset$, we simply have $G^k=G^{k-1}$, and $S^k=S^{k-1}$, and $Z^k=Z^{k-1}$.
		Using ($\spadesuit$) for $i=k$, we obtain that $Z^k\subseteq W_O^k$, and because $S^{k+1}=S^k\setminus\Atr_O(G^k,W_O^k)\subseteq S^k\setminus W_O^k\subseteq S^k\setminus Z^k$
		we obtain that $|S^{k+1}|\leq|S^k|-|Z^k|\leq p_O-(\lfloor\frac{p_O}{2}\rfloor+1)\leq\lfloor\frac{p_O}{2}\rfloor$, as initially claimed.
		
		If $k=m$, we have $Z^m\subseteq W_O^m$ by ($\spadesuit$).
		If $k+1\leq m$, we have $S^m\subseteq S^{k+1}$ (our procedure only removes nodes from the game) and $Z^m\subseteq S^m$, 
		so $|Z^m|\leq\lfloor\frac{p_O}{2}\rfloor$ by the above paragraph,
		and also $Z^m\subseteq W_O^m$ by ($\spadesuit$).
		Because after the $m$-th call to $\textsc{Solve}_O$ the procedure ends, we have $W_O^m=\emptyset$, so also $Z^m=\emptyset$, and thus $S^m=\emptyset$ by ($\clubsuit$).
		We have $S^{m+1}\subseteq S^m$, so $S^{m+1}=S\cap\nodes(G^{m+1})=\emptyset$.
		This is exactly the conclusion of the lemma, since the set returned by the procedure is $\nodes(G^{m+1})$.
	\end{proof}

\section{Conclusions}
	
	To the list of the four existing quasi-polynomial-time algorithms solving parity games, we have added a new one.
	It uses a rather different approach: it analyses recursively the game graph, like the Zielonka's algorithm.

	Notice that the number of recursive calls in our algorithm may be smaller than in the original Zielonka's algorithm, because of the precision parameters, but it may also be larger.
	Indeed, while $\textsc{Solve}_E$ in the original Zielonka's algorithm stops after the first time when a recursive call returns $\emptyset$,
	in our algorithm the procedure stops after the second time when a recursive call returns $\emptyset$.

	The algorithm, as is, turns out not to be very efficient in practice.
	Beside of that, we believe that it can serve as a good starting point for a more optimized algorithm.
	Over the years, some optimizations to the Zielonka’s algorithm were proposed.
	For example, Liu, Duan, and Tian~\cite{LiuDT14} replace the loop guard $W_O=\emptyset$ by $W_O=\Atr_O(G,W_O)$ 
	(which ensures that $W_O$ will be empty in the next iteration of the loop).
	Verver~\cite{verver} proposes to check whether $\Atr_E(G,N_h)$ contains all nodes of priority $h-1$, 
	and if so, to extend $N_h$ by nodes of the next highest Even priority (i.e., $h-2$).
	It seems that these optimizations can be applied to our algorithm as well.
	
	A straightforward optimization is to decrease $p_O$ and $p_E$ to $|G|$ at the beginning of every recursive call.
	
	Another idea is to extend the recursive procedure so that it will return also a Boolean value saying whether the returned set surely equals the whole winning region 
	(i.e., whether the precision parameters have not restricted anything).
	If while making the recursive call with smaller precision (line 8) the answer is positive, but the returned set $W_O$ is empty, 
	we can immediately stop the procedure, without making the recursive call with the full precision (line 13).
	
	One can also observe that the call to $\textsc{Solve}_O$ in line 13 (with the full precision) gets the same subgame $H$ as the last call to $\textsc{Solve}_O$ in line 8 (with decreased precision).
	A very rough idea is to make some use of the computations performed by the decreased-precision call during the full-precision call.
	
	We leave implementation and evaluation of the above (and potentially some other) optimizations for a future work.
	
\bibliography{bib}

\end{document}